\newcommand{\m}[1]{\mathcal{#1}}
\newcommand{\e}{\exists}
\newcommand{\vide}{\emptyset}
\newcommand{\fois}{\bullet}
\newcommand{\ma}{\mathbb{A}}
\providecommand{\cftab}[1]{(\textit{cf.} tableau~\ref{#1})}
\newcommand\ALL{\text{ALL}}
\newcommand\tup[1]{\ensuremath{\mathrm{(}#1\mathrm{)}}}
\newcommand\tupall{\tup{\ALL, \dotsc, \ALL}}
\newcommand\tupvide{\tup{\emptyset, \dotsc, \emptyset}}
\newcommand\OLAP{\textsc{Olap}}
\newcommand\Count{\textsc{Count}}
\newcommand\COUNT{\textsc{Count}}
\newcommand\SUM{\textsc{Sum}}
\newcommand\SGBD{\textsc{Sgbd}}
\newcommand\SQL{\textsc{Sql}}
\newcounter{ex}
\newcounter{ex2}
\newenvironment{example}
{\par \parindent 0cm \refstepcounter{ex} \emph{Exemple \arabic{ex}}}%
{\nopagebreak[4]\par}
\newenvironment{definition}[1]
{\par \parindent 0cm \refstepcounter{ex2} \emph{Définition \arabic{ex2}} \textbf{[#1]} -}%
{\nopagebreak[4]\par}
\theoremstyle{plain}
       \newtheorem{prop}{Proposition}[section]}
\theoremstyle{plain}
       \newtheorem{cor}{Corrolaire}[section]}
\theoremstyle{plain}
\theoremstyle{plain}
\theoremstyle{plain}
       \newtheorem{theoreme}{Theorème}[section]}
\theoremstyle{plain}
       \newtheorem{proof}{Preuve}[section]}
\theoremstyle{plain}
       \newtheorem{remarque}{Remarque}}
\newcommand{\algorithmicexit}{\textbf{exit}}
\newcommand{\FORALE}[2][default]{\ALC@it\algorithmicforall\textbf{e}\ #2\ %
\algorithmicdo%
\ALC@com{#1}\begin{ALC@for}}
\newcommand{\RETURN}[2][default]{\ALC@it%
\algorithmicreturn\ #2\ \ALC@com{#1}}
\newcommand{\EXITFOR}{\ALC@it%
\algorithmicexit\ \algorithmicfor}
\newcommand{\EXITLOOP}{\ALC@it%
\algorithmicexit\ \algorithmicloop}
\newcommand{\FORALLOL}[2]{\ALC@it\algorithmicforall%
\ #1\ \algorithmicdo\ #2}
\newcommand{\IFTHENOL}[2]{\ALC@it\algorithmicif%
\ #1\ \algorithmicthen\ #2 \\ }
\newcommand{\IFTHENELSEOL}[3]{\ALC@it\algorithmicif%
\ #1\ \algorithmicthen\ #2\ \algorithmicelse \ #3}
\newcommand{\THENOLO}[1]{\ALC@it \hspace*{2,15cm} \algorithmicelse%
\ #1 \\ }
\newcommand{\THENOLA}[1]{\ALC@it \hspace*{2,65cm} \algorithmicelse%
\ #1 \\ }
\newcommand{\THENOLB}[1]{\ALC@it \hspace*{1,85cm} \algorithmicelse%
\ #1 \\ }
\newcommand{\THENOLC}[1]{\ALC@it \hspace*{1,95cm} \algorithmicelse%
\ #1 \\ }
\renewcommand{\algorithmicif}{\textbf{si}}
\renewcommand{\algorithmicthen}{\textbf{alors}}
\renewcommand{\algorithmicelse}{\textbf{sinon}}
\renewcommand{\algorithmicfor}{\textbf{pour}}
\renewcommand{\algorithmicforall}{\textbf{pour tout}}
\renewcommand{\algorithmicdo}{\textbf{faire}}
\renewcommand{\algorithmicloop}{\textbf{boucle}}
\renewcommand{\algorithmicexit}{\textbf{quitter}}
\begin{document}


\title[Cubes Convexes]{Cubes Convexes}
\journal{Ingénierie des Systèmes d'information -09/2006- Elaboration des entrepôts de données}{-}{-}
\author{Sébastien Nedjar \andauthor Alain Casali \andauthor Rosine Cicchetti \andauthor Lotfi Lakhal }
\address{Laboratoire d'Informatique Fondamentale de Marseille
         (LIF), CNRS UMR 6166 \\
         Universit\'e de la M\'editerran\'ee, Case 901 \\
         163 Avenue de Luminy, 13288 Marseille cedex 9 \\
         \{nedjar, casali, cicchetti, lakhal\}@lif.univ-mrs.fr
}

\resume
{De nombreuses approches ont proposé de
pré-calculer des cubes de données afin de répondre efficacement
aux requêtes \OLAP. La notion de
cube de données a été déclinée de différentes manières : cubes icebergs,
cubes intervallaires ou encore cubes différentiels. Dans cet
article, nous introduisons le concept de cube convexe qui permet
de capturer tous les tuples d'un cube de données satisfaisant une
combinaison de contraintes monotones/antimonotones et peut être représenté de façon très
compacte de manière à optimiser à la fois le temps de calcul et
l'espace de stockage nécessaire. Le cube convexe n'est pas une
structure <<~{\it de plus}~>> à ajouter à la liste des variantes
du cube, mais nous le proposons comme une structure unificatrice
caractérisant, de manière simple, solide et homogène,
les autres types de cubes cités. Enfin, nous proposons une
nouvelle variante : le cube émergent qui met en
évidence les renversements significatifs de tendances. Nous en
proposons une représentation compacte et cohérente avec les
caractérisations précédentes.}
\abstract{In various approaches, data cubes are pre-computed in
order to answer efficiently \OLAP~queries.
The notion of data cube has been declined in various ways: iceberg
cubes, range cubes or differential cubes. In this paper, we
introduce the concept of convex cube which captures all the tuples
of a datacube satisfying a constraint combination. It can be
represented in a very compact way in order to optimize both
computation time and required storage space. The convex cube is
not an additional structure appended to the list of cube variants
but we propose it as a unifying structure that we use to
characterize, in a simple, sound and homogeneous way, the other
quoted types of cubes. Finally, we introduce the concept of
emerging cube which captures the significant trend inversions.
characterizations.}

\motscles{Analyse multi-dimensionnelle, Cubes de données, Cubes convexes, Cubes émergents, Transversaux cubiques}
\keywords{Multidimensional analysis, Datacubes, Convex cubes, Emergent Cubes, Cube Transversals}

\maketitlepage

\section{Introduction et motivations}
En pré-calculant tous les agrégats possibles à différents niveaux
de granularité, les cubes de données \cite{GCBLRVPP97} permettent
une réponse efficace aux requêtes \OLAP~et sont donc un concept
clef pour la gestion des entrepôts. Plus récemment, le calcul de
cubes a été utilisé avec succès pour l'analyse multidimensionnelle
de flots de données \cite{Han05}. Dans ce type d'applications,
d'énormes volumes de données, à un niveau de granularité très fin,
sont générés sous forme de flux continu qu'il est inenvisageable
de balayer plusieurs fois. Or, les utilisateurs de telles
applications dynamiques ont besoin d'une vision plus macroscopique
des données, d'analyser les tendances générales et leurs variations
au cours du temps. Calculer des cubes à partir de flots de données
s'avère
donc très pertinent.

Des travaux de recherche ont proposé différentes variations autour
du concept de cube de données. Par exemple, les cubes icebergs
\cite{BR99} sont des cubes de données partiels qui, à l'instar des
motifs fréquents, ne recèlent que les tendances suffisamment
générales pour être pertinentes, en imposant aux valeurs des différentes mesures d'être
supérieures à des seuils minimaux donnés. Les cubes intervallaires
\cite{CCL03_TSI} peuvent être vus comme une extension des cubes
icebergs dans la mesure où ils permettent à l'utilisateur de se
focaliser sur les tendances qui s'inscrivent dans une <<~{\it
fenêtre}~>> significative ($i.e.$ les mesures sont bornées par des
seuils minimaux et maximaux). Enfin, les nouvelles tendances
apparaissant (ou les tendances avérées disparaissant) lors du
rafraîchissement d'un entrepôt ou dans un flot de données sont
mises en évidence par le calcul de cubes différentiels
\cite{Cas04_dawak}. Ceux-ci peuvent être perçus comme la
différence ensembliste entre deux cubes : l'un stocké dans
l'entrepôt et l'autre calculé à partir des données de
rafraîchissement. Suivant l'ordre des deux opérandes, sont
exhibées les tendances disparaissantes ou apparaissantes.

Souvent ces différents types de cubes, à commencer par le
cube de données originel, n'ont pas été appréhendés comme des concepts mais
comme le résultat de requêtes ou d'algorithmes plus efficaces.

Dans cet article, nous proposons une nouvelle structure
unificatrice qui nous permet de caractériser les différents cubes
évoqués et nous définissons une nouvelle variante : les cubes
émergents. Plus précisément, nos contributions sont les suivantes
:

\begin{itemize}

\item [(i)] nous établissons les fondements d'une nouvelle structure appelée cube convexe qui s'appuie sur l'espace
de recherche du treillis cube. Le cube convexe prend en compte des
combinaisons de contraintes monotones et anti-monotones. Nous
montrons que cette structure est un espace convexe \cite{Vel93} et
qu'elle peut donc être représentée par ses bordures ;

\item [(ii)] grâce à la structure de cube convexe, nous introduisons les définitions
formelles des cubes de données, cubes icebergs, cubes
intervallaires et cubes différentiels ;

\item [(iii)] enfin, en nous inspirant de \cite{DL05} qui définit les
motifs émergents dans un contexte binaire et pour la classification supervisée, 
nous proposons le concept de cube émergent. Celui-ci capture des tendances non
significatives mais qui, lors d'un rafraîchissement, le deviennent
dans des proportions pertinentes pour l'utilisateur. Il permet
aussi, de manière symétrique, d'exhiber des tendances
significatives qui chutent au point de ne plus l'être. Outre les
tendances apparaissant ou disparaissant (capturées par le cube
différentiel), le cube émergent permet au décideur de connaître et
d'analyser des renversements de tendances. La mise en évidence de
changements de tendances est tout aussi intéressante dans
l'analyse de flots de données que dans le contexte de bases
\OLAP~classiques mais elle est plus critique à établir puisqu'en
temps réel.

\end{itemize}

Le reste de l'article est organisé de la manière suivante. Le
paragraphe \ref{sec:travaux_ant} présente notre cadre de travail
en décrivant brièvement l'espace de recherche multidimensionnel
que nous utilisons ensuite : le treillis cube. Au paragraphe
\ref{Section:CubeConvexe}, nous détaillons la structure de cube
convexe. Son utilisation pour caractériser les différents types de
cubes est proposée dans le paragraphe suivant. Enfin, nous
introduisons le concept de cube émergent et le définissons de
manière cohérente avec les précédents.

\section{Concepts de base}
\label{sec:travaux_ant}

Dans ce paragraphe, nous présentons le concept de treillis cube
\cite{CCL03_SIAM} permettant de formaliser les nouvelles
structures introduites par notre proposition.

Tout au long de cet article, nous faisons les hypothèses suivantes
et utilisons les notations introduites. Soit $r$ une relation de
schéma $\m{R}$. Les attributs de $\m{R}$ sont divisés en deux
ensembles ($i$) $\m{D}$ l'ensemble des attributs dimensions (aussi
appelés catégories ou attributs nominaux) qui correspondent aux
critères d'analyse et ($ii$) $\m{M}$ l'ensemble des attributs
mesures.

\subsection{Espace de recherche : treillis cubes} \label{subsection:cl}

L'espace multidimensionnel d'une relation d'attributs catégories
$r$ regroupe toutes les combinaisons valides construites en
considérant l'ensemble des valeurs des attributs de $\m{D}$,
ensemble enrichi de la valeur symbolique $\ALL$.

L'espace multidimensionnel de $r$ est noté et défini comme suit :
\\ $Space(r$) = $(\times_{A~\in~\m{D}} (Dim(A)~\cup$ ALL))
$\cup~\{\tupvide\}$ où $\times$ symbolise le Produit cartésien,
$\tupvide{}$, le majorant universel et $Dim(A)$ la projection de $r$ sur $A$.
Toute combinaison de $Space(r)$ est un tuple et représente un motif multidimensionnel.

L'espace multidimensionnel de $r$ est structuré par la relation de
géneralisation/spé\-cialisation entre tuples, notée $\preceq_g$.
Cet ordre est originellement introduit par T. Mitchell \cite{Mi82}
dans le cadre de l'apprentissage de concepts. Dans un contexte de
gestion de cubes de données, cet ordre a la même sémantique que
celle des opérateurs {\sc Rollup}/{\sc Drilldown} sur le cube
\cite{GCBLRVPP97} et sert de comparateur entre les tuples (cellules) du cube quotient \cite{LPH02}. Soit $u$, $v$ deux tuples de l'espace
multidimensionnel de $r$ :
 $$ u \preceq_g v \Leftrightarrow \left\lbrace
\begin{array}{l} \forall A \in \m{D} \text{ tel que } u[A] \neq
\ALL , \\ \hspace*{1cm} u[A] = v[A] \\
 \text{ou~} v = \tupvide{}
\end{array} \right.  $$
 Si $u \preceq_g v$, nous disons que $u$ est plus général que $v$
 dans $Space(r)$.

\begin{example}
- Considérons la relation \textsc{Document$_1$} ({\it cf} table
\ref{tab:r1}) répertoriant les quantités vendues par Type, par
Ville et par Éditeur.
Dans l'espace multidimensionnel de cette relation, nous avons :
$\tup{\text{Roman, ALL, ALL}}$ $\preceq_g \tup{\text{Roman, Marseille, Gallimard}}$, c'est-à-dire que le tuple 
$\tup{\text{Roman, ALL, ALL}}$ est plus général que $\tup{\text{Roman, Marseille, Gallimard}}$ 
et $\tup{\text{Roman, Marseille, Gallimard}}$ est plus spécifique que 
$\tup{\text{Roman, ALL, ALL}}$. De plus, tout motif
multidimensionnel généralise le tuple $\tup{\vide, \vide, \vide}$
et spécialise le tuple $\tup{\ALL, \ALL, \ALL}$.
\end{example}
\begin{table}[ht!] \begin{center}
\begin{tabular}{ccc|c}
\toprule
   Type      &   Ville    &  Éditeur  & Quantité\\
\midrule
    Roman    &  Marseille & Gallimard &        2\\
    Roman    &  Marseille & Hachette  &        2\\
    Scolaire &    Paris   & Hachette  &        1\\
    Essai    &    Paris   & Hachette  &        6\\
    Scolaire &  Marseille & Hachette  &        1\\
\bottomrule
\end{tabular}
\caption{Relation exemple \textsc{Document$_1$}} \label{tab:r1}
\end{center} \end{table}

Les deux opérateurs de base définis pour la construction de tuples
sont la Somme (notée $+$) et le Produit (noté $\fois$). La somme
de deux tuples retourne le tuple le plus spécifique généralisant
les deux opérandes. Soit \textit{u} et \textit{v} deux tuples de
$Space(r$), $ t = u + v \Leftrightarrow \forall A \in \m{D},$
$$t[A] = \left\lbrace
\begin{array}{l}
u[A]~ \text{si} ~ u[A]~ = ~v[A] \\
 \text{ALL sinon.}\\
\end{array} \right. $$
Nous disons que \textit{t} est la somme des tuples \textit{u} et \textit{v}.

\begin{example} - Dans notre exemple, nous avons $\tup{\text{Roman, Marseille, Gallimard}}
+$ (Roman, Marseille, Hachette) $= \tup{\text{Roman, Marseille, ALL}}$.
Ceci implique que $\tup{\text{Roman, Marseille, ALL}}$ est
construit à partir des tuples $\tup{\text{Roman, Marseille, Gallimard}}$ et 
$\tup{\text{Roman, Marseille, Hachette}}$.
\end{example}

Le produit de deux tuples retourne le tuple le plus général
spécialisant les deux opérandes. Si pour ces deux tuples, il
existe un attribut $A$ prenant des valeurs distinctes et réelles
(i.e. existant dans la relation initiale), alors seul le tuple
$\tupvide{}$ les spécialise (hormis ce tuple, les ensembles
permettant de les construire sont disjoints). Soit $u$ et $v$ deux
tuples de $Space(r)$, alors :   $t = u \fois v \Leftrightarrow$
$$ \left\lbrace
\begin{array}{l}
t = \tupvide{} \text{ si } \exists A \in \m{D} \text{ tel que }
u[A] \neq v[A] \neq \ALL, \\
\text{ sinon } \forall A \in \m{D} \left\lbrace
\begin{array}{l}
t[A] = u[A] \text{ si } v[A] = \ALL \\
t[A] = v[A] \text{ si } u[A] = \ALL.
\end{array} \right.
\end{array} \right. $$

Nous disons que \textit{t} est le produit des tuples \textit{u} et
\textit{v}.
\begin{example} - Nous avons $\tup{\text{Roman, \ALL, \ALL}} \bullet\
\tup{\text{ALL, Marseille, ALL}} = \tup{\text{Roman, Marseille, ALL}}$.
Ainsi, $\tup{\text{Roman, ALL, ALL}}$ et $\tup{\text{ALL,
Marseille, ALL}}$ généralisent (Roman, Marseille, ALL) et ce
dernier tuple participe à la construction de $\tup{\text{Roman,
ALL, ALL}}$ et de $\tup{\text{ALL, Marseille, ALL}}$ (directement
ou non). Les tuples $\tup{\text{Roman, ALL, ALL}}$ et
$\tup{\text{Scolaire, ALL, ALL}}$ n'ont d'autre point commun que le
tuple de valeurs vides (i.e. le tuple $\tup{\vide, \vide,
\vide}$).
\end{example}
 En dotant l'espace multidimensionnel $Space(r)$ de la
relation de généralisation entre tuples et en utilisant les
opérateurs Produit et Somme, nous introduisons une structure
algébrique appelée treillis cube qui fixe un cadre théorique et
général pour l'\textsc{Olap} et la fouille de bases de données multidimensionnelles.
\begin{theoreme} \label{th:TR}
- Soit $r$ une relation d'attributs catégories. L'ensemble ordonné
$CL(r) = \langle Space(r), \preceq_g \rangle$ est un treillis
complet appelé treillis cube dans lequel les opérateurs Meet
($\bigwedge$) et Join ($\bigvee$) sont définis par :
\begin{enumerate}
\item $\forall~ T \subseteq CL(r),~\bigwedge T = +_{t \in T}~ t $
\item $\forall~ T \subseteq CL(r),~\bigvee T = \fois_{t \in T}~ t$
\end{enumerate}
\end{theoreme}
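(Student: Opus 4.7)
The plan is to establish that $\preceq_g$ is a partial order on $Space(r)$, then show that the Sum $+$ computes binary meets and the Product $\bullet$ computes binary joins, and finally invoke finiteness of $Space(r)$ to promote this to a complete lattice. Since $r$ is a finite relation, each $Dim(A)$ is finite, so $Space(r)$ is finite; consequently, a poset in which every pair admits a meet and a join, together with a top and a bottom, is automatically a complete lattice, so the only substantive work is at the binary level.

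First I would verify reflexivity, antisymmetry, and transitivity of $\preceq_g$ by a direct case analysis on whether $u[A]$ equals $\ALL$ or not, treating the clause $v = \tupvide$ separately (so that $\tupvide$ becomes the maximum element, and $\tupall$ the minimum, of the order). Then for the Meet case, given $u,v \in Space(r)$, I would check from the definition of $+$ that $(u+v) \preceq_g u$ and $(u+v) \preceq_g v$, since $(u+v)[A] \neq \ALL$ forces $(u+v)[A] = u[A] = v[A]$ by construction. For the greatest-lower-bound property, if $w \preceq_g u$ and $w \preceq_g v$, then whenever $w[A] \neq \ALL$ we deduce $w[A] = u[A] = v[A]$, so $(u+v)[A] = u[A] = w[A]$, giving $w \preceq_g (u+v)$. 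Hence $u + v = u \wedge v$.

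For the Join case, I would split on whether there exists $A \in \m{D}$ with $u[A] \neq v[A]$ and both values distinct from $\ALL$. If such $A$ exists, then $u \bullet v = \tupvide$, which is the top of $\preceq_g$ and thus trivially specializes both $u$ and $v$; moreover any common upper bound $w$ must satisfy $w[A] = u[A]$ and $w[A] = v[A]$ simultaneously, which is impossible unless $w = \tupvide$ (using the escape clause in the definition of $\preceq_g$). If no such $A$ exists, then $u \bullet v$ is defined componentwise, it specializes both operands by inspection, and any common specialization $w$ must agree with $u$ on the non-$\ALL$ coordinates of $u$ and likewise for $v$, hence must specialize $u \bullet v$. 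This gives $u \bullet v = u \vee v$.

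Having handled pairs, I would extend to arbitrary finite $T$ by induction, using associativity and commutativity of $+$ and $\bullet$ (easily checked on components). The empty meet is the top $\tupvide$ and the empty join is the bottom $\tupall$. The main obstacle is the Join case, specifically justifying that no tuple other than $\tupvide$ can specialize two incompatible operands; this is where the role of the absorbing element $\tupvide$ in the order $\preceq_g$ must be invoked carefully, since it is the one feature of the order that departs from a purely componentwise comparison.
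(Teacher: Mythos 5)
The paper itself gives no proof of this theorem: it is recalled as background from the cited reference on the cube lattice, so there is nothing internal to compare against. Judged on its own terms, your argument is the standard one and its overall architecture is sound: $\preceq_g$ is a partial order with bottom $\tupall$ and top $\tupvide$, the Sum is the binary meet, the Product is the binary join, and finiteness of $Space(r)$ plus induction via associativity upgrades binary meets and joins to arbitrary ones. Your treatment of the Join, including the observation that two tuples with distinct non-$\ALL$ values on some attribute admit $\tupvide$ as their only common specialization, is exactly the delicate point and you handle it correctly.

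There is one genuine gap, symmetric to the point you flagged for the Product but which you did not flag for the Sum. The element $\tupvide$ is adjoined to the Cartesian product $\times_{A}(Dim(A)\cup\ALL)$, and the componentwise formula for $+$ does not compute the meet when one operand is $\tupvide$: reading $(\emptyset,\dotsc,\emptyset)$ literally, for any ordinary tuple $u$ one gets $u[A]\neq\emptyset$ on every $A$, hence $u+\tupvide=\tupall$, whereas the meet of $u$ with the top element must be $u$ itself. Your greatest-lower-bound argument also silently assumes the componentwise clause of $\preceq_g$, which fails precisely when an operand is $\tupvide$ (where only the escape clause applies). The fix is routine --- extend $+$ and $\fois$ by the conventions $u+\tupvide=u$ and $u\fois\tupvide=\tupvide$ before running your argument --- but as written the proof of the Meet case is incorrect on this degenerate input, and since $\tupvide$ is the one element whose presence makes the structure a lattice rather than a mere meet-semilattice, the convention deserves to be stated explicitly.
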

\section{Cubes convexes} \label{Section:CubeConvexe}
Dans ce paragraphe, nous étudions la structure du treillis cube en
présence de conjonctions de contraintes monotones et/ou
antimonotones selon la généralisation. Nous montrons que cette
structure est un espace convexe qu'on appelle cube convexe. Nous
proposons des représentations condensées (avec bordures) du cube
convexe avec un double objectif : définir d'une manière compacte
l'espace de solutions et décider si un tuple $t$ appartient, ou
pas, à cet espace.

Nous prenons en compte les
contraintes monotones et/ou antimonotones les plus couramment
utilisées en fouille de base de données \cite{PH02}. Celles-ci
peuvent porter sur :
\begin{itemize}
 \item des mesures d'intérêts comme la fréquence de motifs, la confiance, la
 corrélation \cite{HaKa01} : dans ce cas, seuls les attributs dimensions de $\m{R}$
 sont nécessaires;
 \item des agrégats selon des attributs mesures $\m{M}$ calculés en utilisant
 des fonctions statistiques additives (\textsc{Count, Sum,
 Min, Max}).
\end{itemize}
\noindent Nous rappelons les définitions de la notion d'espace convexe, des contraintes monotones
et antimonotones selon l'ordre de généralisation $\preceq_g$.
\begin{definition}{Espace Convexe}
Soit $(\m{P}, \leq)$ un ensemble partiellement ordonné, $ \m{C} \subseteq \m{P}$ 
est un espace convexe \cite{Vel93} \textit{si et seulement si} $\forall x,y,z \in \m{P}$ 
tel que $x \leq y \leq z$ et $x,z \in \m{C} \Rightarrow y \in \m{C}$.

Donc $\m{C}$ est borné par deux ensembles : $(i)$ un majorant (ou \textit{<<~Upper set~>>}), 
noté $S$, défini par $S = \max_{\leq}(\m{C})$, $(ii)$ un minorant 
(ou \textit{<<~Lower set~>>}), noté $G$ et défini par $G = \min_{\leq}(\m{C})$.
\end{definition}
\begin{definition}{Contraintes monotones/antimonotones selon la généralisation}
\begin{enumerate}
\item Une contrainte \textit{Const} est dite monotone pour l'ordre de généralisation si et seulement si :  $\forall~t,u \in CL(r) : [t
\preceq_g u$ et $Const(t)] \Rightarrow Const(u)$.
 \item Une contrainte \textit{Const} est dite antimonotone pour l'ordre de généralisation si et
seulement si :  $\forall~t,u \in CL(r) : [t \preceq_g u$ et
$Const(u)] \Rightarrow Const(t)$.
\end{enumerate}
\end{definition}
\noindent \textit{Notations : } nous notons $cmc$ (respectivement
$camc$) une conjonction de contrain\-tes monotones (respectivement
antimonotones) et $chc$ une conjonction hybride de contraintes
(monotones et antimonotones). En reprenant les symboles $S$ et $G$
introduits dans \cite{Mi82} et suivant les cas considérés, les
bornes introduites sont indicées par le type de contrainte
considérée. Par exemple $S_{camc}$ symbolise l'ensemble des tuples
les plus spécifiques vérifiant la conjonction de contraintes
antimonotones. 
\begin{remarque} - Pour éviter les ambiguïtés faites en apprentissage \cite{RK01}, 
il est important de noter que les bornes $S_{chc}$ et $G_{chc}$ ne
sont pas les mêmes que les ensembles $S$ et $G$ définis dans le
cadre de l'espace de versions, car un espace de versions est un
espace convexe, mais tout espace convexe n'est pas un espace de
versions, à cause des contraintes considérées.
\end{remarque}
\begin{example} -  Dans l'espace multidimensionnel exemple de la relation 
\textsc{Document$_1$} \cftab{tab:r1}, nous voulons
connaître tous les tuples dont la somme des valeurs pour
l'attribut mesure {\it Quantité} est supérieure ou égale à 3. La
contrainte <<~\textsc{Sum}(\textit{Quantité}) $\geq$ 3~>> est une
contrainte antimonotone. Si le total des ventes par Type, par
Ville et par Éditeur est supérieur à 3, il l'est {\it a fortiori}
pour un niveau plus agrégé de granularité $e.g.$ par Type et
par Éditeur (toutes villes confondues) ou par Ville (tous types et
éditeurs confondus). De même, si nous voulons connaître tous les
tuples dont la somme des valeurs pour l'attribut \textit{Quantité}
est inférieure ou égale à 6, la contrainte exprimée
<<~\textsc{Sum}(\textit{Quantité}) $\leq$ 6~>> est monotone.
Considérons que le total des ventes par type (les attributs
Ville et Éditeur ont pour valeur $\ALL$) respecte cette contrainte,
la même information observée à un niveau de détail plus fin
satisfait forcément la même condition et donc la somme des ventes
par Type et par Ville est inférieure à 6 de même que la somme
des ventes par Type, par Ville et par Éditeur.
\end{example}

\begin{remarque}
\begin{itemize}
\item Nous supposons par la
suite que le tuple $\tupall{}$ vérifie toujours la conjonction de
contraintes antimonotones et que le tuple $\tupvide{}$ vérifie
toujours la conjonction de contraintes monotones. Avec ces
hypothèses, l'espace des solutions contient au moins un élément
(éventuellement le tuple de valeurs vides).
\item De plus, nous supposons que le tuple $\tupall{}$ ne vérifie jamais la
conjonction de contraintes monotones et que le tuple $\tupvide{}$
ne vérifie jamais la conjonction de contraintes antimonotones, car
sinon, l'espace de solutions est $Space(r)$.
\end{itemize}
\end{remarque}

\begin{theoreme} \label{chap:cl:th:convexe}
- Tout treillis cube avec contraintes monotones et/ou
antimonotones est un espace convexe qu'on appelle cube convexe,
$CC(r)_{const} = \{ t \in CL(r) \mid const(t)\}$, où $const$ peut
être $cmc$, $camc$ ou $chc$ suivant que la combinaison de
contraintes est monotone, antimonotone ou hybride. Son majorant
$S_{const}$ et son minorant $G_{const}$ sont :
\begin{equation*} 
1. \text{ si } const = cmc,\ \left\lbrace
\begin{array}{l}
 G_{cmc} = \min_{\preceq_g}(CC(r)_{cmc}) \\
S_{cmc} = \tupvide{}\\
\end{array} \right. \end{equation*} 

\begin{equation*} 2. \text{ si } const = camc,\
\left\lbrace
\begin{array}{l}
G_{camc} = \tupall{} \\
S_{camc} =
\max_{\preceq_g}(CC(r)_{camc})\\
\end{array} \right. \end{equation*} 

\begin{equation*} 
3.  \text{ si } const = chc,\ \left\lbrace
\begin{array}{l}
G_{chc} = \min_{\preceq_g}(CC(r)_{chc}) \\
S_{chc} = \max_{\preceq_g}(CC(r)_{chc})
\end{array} \right. \end{equation*}
\end{theoreme}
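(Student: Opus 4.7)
La strat�gie consiste � traiter les trois cas ($cmc$, $camc$, $chc$) s�par�ment, en �tablissant d'abord la convexit� de $CC(r)_{const}$ puis en identifiant ses bordures : une fois la convexit� acquise, les expressions de $S_{const}$ et $G_{const}$ se lisent quasiment par d�finition, puisque le majorant et le minorant d'un espace convexe sont par construction ses �l�ments maximaux et minimaux au sens de l'ordre consid�r�.

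Je commencerais par le cas monotone. Par d�finition m�me d'une contrainte monotone selon $\preceq_g$, si $t \in CC(r)_{cmc}$ et $t \preceq_g u$ alors $u \in CC(r)_{cmc}$ ; autrement dit, $CC(r)_{cmc}$ est ferm� par sp�cialisation (un \emph{upper set} au sens de $\preceq_g$). La convexit� s'en d�duit sans effort : dans une cha�ne $x \preceq_g y \preceq_g z$ avec $x \in CC(r)_{cmc}$, la seule relation $x \preceq_g y$ suffit d�j� � garantir $y \in CC(r)_{cmc}$, sans m�me faire intervenir $z$. Comme le tuple $\tupvide{}$ appartient par hypoth�se � $CC(r)_{cmc}$ et qu'il est le maximum global de $CL(r)$, il est l'unique �l�ment maximal, d'o� $S_{cmc} = \tupvide{}$ ; $G_{cmc}$ n'est alors rien d'autre que l'ensemble des �l�ments minimaux. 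Le cas antimonotone se traite par un argument strictement dual : $CC(r)_{camc}$ est un \emph{lower set}, dont le tuple $\tupall{}$ est l'�l�ment minimal global, ce qui donne imm�diatement $G_{camc} = \tupall{}$ et $S_{camc} = \max_{\preceq_g}(CC(r)_{camc})$.

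L'�tape � priori centrale est le cas hybride, mais elle se r�sout en combinant les deux arguments pr�c�dents. Soit $x \preceq_g y \preceq_g z$ avec $x, z \in CC(r)_{chc}$. On d�compose $chc = cmc \wedge camc$ : la monotonicit� appliqu�e � $x \preceq_g y$ et � $cmc(x)$ fournit $cmc(y)$, tandis que l'antimonotonicit� appliqu�e � $y \preceq_g z$ et � $camc(z)$ fournit $camc(y)$. Par conjonction, $chc(y)$, ce qui �tablit la convexit� de $CC(r)_{chc}$. Les bordures s'identifient alors par d�finition aux ensembles d'�l�ments minimaux et maximaux, soit $G_{chc} = \min_{\preceq_g}(CC(r)_{chc})$ et $S_{chc} = \max_{\preceq_g}(CC(r)_{chc})$.

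La difficult�, plus conceptuelle que technique, est essentiellement de maintenir coh�rente l'orientation de $\preceq_g$ : dans le treillis cube, $\tupall{}$ est l'�l�ment minimum et $\tupvide{}$ l'�l�ment maximum, ce qui est inverse de l'intuition habituelle en fouille de donn�es o� l'on ordonne souvent par sp�cificit� croissante. C'est pr�cis�ment cette convention qui fait appara�tre $\tupvide{}$ comme sommet $S_{cmc}$ du cube convexe monotone et $\tupall{}$ comme base $G_{camc}$ du cube convexe antimonotone ; un soin particulier devra �tre apport� � la r�daction pour que le sens des implications dans les d�finitions de monotonie et d'antimonotonie ne soit pas invers� par m�garde.
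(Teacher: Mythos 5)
Votre preuve est correcte et aboutit aux m\^emes formules de bordures que l'article, mais elle suit une route sensiblement diff\'erente. L'article ne v\'erifie pas la convexit\'e directement : pour le cas monotone, il red\'efinit $CC(r)_{cmc}$ comme l'intervalle $\{ t \in CL(r) \mid \exists u \in S_{cmc},\ \exists v \in G_{cmc} : v \preceq_g t \preceq_g u\}$ et \'etablit par double inclusion que cet intervalle co\"{\i}ncide avec l'ensemble des solutions $Sol_{cmc}$, la convexit\'e n'\'etant obtenue qu'implicitement (un tel intervalle est toujours convexe) ; le cas antimonotone est ensuite trait\'e par dualit\'e et le cas hybride par intersection $Sol_{chc} = Sol_{cmc} \cap Sol_{camc}$. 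Vous proc\'edez dans l'autre sens : vous montrez d'abord que $CC(r)_{cmc}$ est clos par sp\'ecialisation (et $CC(r)_{camc}$ par g\'en\'eralisation), d'o\`u la convexit\'e imm\'ediatement, puis vous lisez $S_{const}$ et $G_{const}$ par d\'efinition de l'espace convexe ; votre argument \`a trois \'el\'ements pour le cas hybride ($cmc(y)$ h\'erit\'e de $x$, $camc(y)$ h\'erit\'e de $z$) est plus direct que le d\'etour par l'intersection. Votre d\'emarche a l'avantage de d\'emontrer explicitement la propri\'et\'e de convexit\'e annonc\'ee dans l'\'enonc\'e, et votre identification de $S_{cmc} = \tupvide{}$ et $G_{camc} = \tupall{}$ \`a partir des hypoth\`eses de la remarque pr\'ec\'edente est correcte. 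En contrepartie, la double inclusion de l'article \'etablit au passage que l'ensemble des solutions se reconstruit exactement \`a partir de ses deux bordures, fait exploit\'e ensuite pour la repr\'esentation condens\'ee ; chez vous il d\'ecoule de la convexit\'e et de la finitude de $CL(r)$ (tout \'el\'ement est encadr\'e par un minimal et un maximal de l'ensemble), mais il m\'eriterait d'\^etre explicit\'e.
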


\begin{proof}
\begin{enumerate}
\item Soit $CC(r)_{cmc} = \{ t \in CL(r) \mid
\e~ u \in S_{cmc} \text{ et } \e~ v \in G_{cmc} : t \preceq_{g} u
\text{ et } v \preceq_{g} t\}$. Nous montrons que $CC(r)_{cmc}$
est l'ensemble des tuples satisfaisant la conjonction de
contraintes monotones. Pour les besoins de la démonstration,
notons $Sol_{cmc}$ cet ensemble solution. Soit $t \in
CC(r)_{cmc}$.
\begin{itemize}
\item Par définition, il existe $v \in G_{cmc} \mid cmc(v) \text{
et } v \preceq_{g} t$. Puisque $cmc$ est une contrainte monotone,
nous avons $cmc(t)$. Par conséquent $t \in Sol_{cmc}$. Donc
$CC(r)_{cmc} \subseteq Sol_{cmc}$. (a)
\item Soit $t \in Sol_{cmc}$, il existe forcément $v \in G_{cmc} \mid
v \preceq_{g} t$ car $G_{cmc}$ représente les minimaux vérifiant
$cmc$. De plus la contrainte $\e u \in S_{cmc} \mid t \preceq_{g}
u$ est toujours vérifiée. Donc $t \in CC(r)_{cmc}$ et $Sol_{cmc}
\subseteq CC(r)_{cmc}$. (b)
\end{itemize}
(a) et (b) $\Rightarrow Sol_{cmc} = CC(r)_{cmc}$
\item vrai par application du principe du dualité \cite{FCA}
sur le cube convexe avec une conjonction de contraintes monotones.
 \item  vrai car si $Sol_{chc} = CC(r)_{chc}$, alors $Sol_{chc} = Sol_{cmc} \cap Sol_{camc}$.
L'application des deux caractérisations précédentes permet de
déduire celle pour $chc$.
\end{enumerate}
\end{proof}

Le majorant $S_{const}$ représente les tuples les plus spécifiques
satisfaisant la conjonction de contraintes et le minorant
$G_{const}$ les tuples les plus généraux satisfaisant la
conjonction de contraintes. Donc
 $S_{const}$ et $G_{const}$ permettent d'obtenir des
 représentations condensées du cube convexe en présence d'une
 conjonction de contraintes monotones et/ou antimonotones.

Le corollaire suivant permet la caractérisation des bordures du
cube convexe en présence d'une conjonction  hybride de contraintes
$chc = camc \wedge cmc$ en ne connaissant que ($i$) soit la
bordure maximale pour la contrainte antimonotone ($S_{camc}$) et
la contrainte monotone $cmc$, ($ii$) soit la bordure minimale pour
la contrainte monotone ($G_{cmc}$) et la contrainte antimonotone
$S_{camc}$.

\begin{cor} ~~
\begin{enumerate}
\item \'{E}tant donné $S_{camc}$ et $cmc$, les bordures de l'ensemble
du cube convexe $CC(r)_{chc}$  :
\begin{equation*}
\left\lbrace
\begin{array}{l}
G_{chc} = min_{\preceq_g} (\{t \in CL(r) \mid \exists t' \in
S_{camc}: \\ \hspace*{1cm} t \preceq_g t' \text{ et } cmc(t)\}) \\
 S_{chc} = \{ t \in S_{camc} \mid \exists t' \in G_{chc}: t' \preceq_g t\}
\end{array} \right.
\end{equation*}
\item \'{E}tant donné $G_{cmc}$ et $camc$, une représentation condensée de $CC(r)_{chc}$ est :
\begin{equation*}
\left\lbrace
\begin{array}{l}
S_{chc} = max_{\preceq_g} (\{t \in CL(r) \mid \exists t' \in
G_{cmc}: \\ \hspace*{1cm} t' \preceq_g t \text{ et } camc(t)\}) \\
G_{chc} = \{ t \in G_{cmc} \mid \exists t' \in S_{chc}: t
\preceq_g t'\}.
\end{array} \right.
\end{equation*}
\end{enumerate}
\end{cor}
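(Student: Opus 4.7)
The plan is to reduce everything to Theorem~\ref{chap:cl:th:convexe} and the observation that, because $camc$ is antimonotone, the set $CC(r)_{camc}$ coincides with the downward-closure (under $\preceq_g$) of $S_{camc}$: namely, a tuple $t$ satisfies $camc$ if and only if there exists $t' \in S_{camc}$ with $t \preceq_g t'$. The two parts of the corollary are dual (cf.\ the duality argument already invoked in the proof of Theorem~\ref{chap:cl:th:convexe}), so I would prove part~1 in detail and then obtain part~2 by the same dual principle.

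For the characterization of $G_{chc}$ in part~1, I would first show the set identity
\[
\{t \in CL(r) \mid \exists t' \in S_{camc}: t \preceq_g t' \text{ et } cmc(t)\} = CC(r)_{chc}.
\]
The inclusion $\subseteq$ is immediate: $t \preceq_g t'$ with $t' \in S_{camc}$ gives $camc(t)$ by antimonotonicity, so $camc(t) \wedge cmc(t)$, i.e.\ $chc(t)$. For $\supseteq$, if $chc(t)$ holds then in particular $camc(t)$, and the maximality clause in Theorem~\ref{chap:cl:th:convexe} supplies a witness $t' \in S_{camc}$ with $t \preceq_g t'$. Once the set identity is established, taking the minima with respect to $\preceq_g$ on both sides gives exactly the formula announced for $G_{chc}$.

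For the characterization of $S_{chc}$, the forward inclusion is the key point. Let $t \in S_{chc}$. Since $camc(t)$ holds, I pick $t'' \in S_{camc}$ with $t \preceq_g t''$; monotonicity of $cmc$ together with $cmc(t)$ and $t \preceq_g t''$ forces $cmc(t'')$, so $t'' \in CC(r)_{chc}$; then the maximality of $t$ in $CC(r)_{chc}$ yields $t = t''$, so $t \in S_{camc}$. The existence of some $t' \in G_{chc}$ with $t' \preceq_g t$ is just the fact (from Theorem~\ref{chap:cl:th:convexe}) that any element of $CC(r)_{chc}$ dominates one of its minima. For the reverse inclusion, if $t \in S_{camc}$ and some $t' \in G_{chc}$ satisfies $t' \preceq_g t$, then $cmc(t')$ and monotonicity of $cmc$ give $cmc(t)$; combined with $camc(t)$ this shows $t \in CC(r)_{chc}$. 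Maximality is then inherited from $S_{camc}$: any $u$ with $t \preceq_g u$ and $u \in CC(r)_{chc}$ satisfies $camc(u)$, so $t$ being maximal for $camc$ forces $u = t$, whence $t \in S_{chc}$.

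I expect the main obstacle to be precisely this last ``transfer of maximality'' step: one must simultaneously exploit monotonicity of $cmc$ (to push $cmc$ from $t$ up to $t''$, or from $t'$ up to $t$) and antimonotonicity of $camc$ (to ensure that the maximal elements of $CC(r)_{chc}$ already lie in $S_{camc}$). Once that interplay is laid out carefully, part~2 follows by the symmetric argument obtained from the duality principle applied to $cmc \leftrightarrow camc$ and $G \leftrightarrow S$.
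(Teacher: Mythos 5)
Your proof is correct, and it takes the route the paper implicitly intends: the corollary is stated there without an explicit proof, as a direct consequence of Theorem~\ref{chap:cl:th:convexe}. Your derivation --- identifying $CC(r)_{camc}$ with the set of generalizations of $S_{camc}$ via antimonotonicity, using monotonicity of $cmc$ to transfer maximality between $S_{chc}$ and $S_{camc}$, and invoking duality for part~2 --- fills in exactly the steps the authors leave to the reader (relying, as their own proof of the theorem does, on finiteness of $Space(r)$ to guarantee the existence of maximal/minimal witnesses).
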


\begin{example}
- Le tableau \ref{bornes:chc} donne les bornes $S_{camc}$,
$S_{chc}$ $G_{cmc}$ et $G_{chc}$ du cube convexe de la relation
exemple en considérant la contrainte hybride <<~$3 \leq
\textsc{Sum}(\textit{Quantité}) \leq 6$~>>.

La caractérisation du cube convexe comme un espace convexe nous
permet de savoir, en ne connaissant que les bordures du cube
convexe, si un tuple quelconque satisfait ou pas la conjonction de
contraintes. En effet, si un tuple de $Space(r)$ vérifie une
conjonction de contraintes antimonotones alors tout tuple le
généralisant la satisfait aussi. Dualement, si un tuple vérifie
une conjonction de contraintes monotones, alors tous les tuples le
spécialisant satisfont aussi ces contraintes. La représentation
par bordure du cube convexe de la relation \textsc{Document$_1$}
\cftab{bornes:chc} permet de répondre facilement à des requêtes
telles que :
\begin{enumerate}
\item Est ce que le nombre d'achats à Marseille est compris entre 3 et 6
?
\item Est ce que le nombre de livres scolaires vendus à Paris est compris entre 3 et 6
 ?
\item Est ce que le nombre de romans vendus à Aubagne par les éditions Hachette est compris entre 3 et 6
 ?
\end{enumerate}

La réponse à la première question est oui car le tuple
$\tup{\text{ALL, Marseille, ALL}}$, donnant les achats effectués
dans la ville de Marseille tous types et éditeurs confondus,
appartient à la bordure $G_{chc}$. Il en est de même pour la
seconde requête (nombre de livres scolaires vendus à Paris) car le tuple
$\tup{\text{Scolaire, Paris, ALL}}$ spécialise le tuple
$\tup{\text{Scolaire, ALL, ALL}}$ appartenant à $G_{chc}$ et
généralise le tuple $\tup{\text{Scolaire, Paris, Hachette}}$
appartenant à la bordure $S_{chc}$. En revanche, la réponse à la
troisième question est non car le tuple
$\tup{\text{ALL, Paris, Hachette}}$ (tous les livres édités par Hachette achetés à
Paris) ne spécialise aucun tuple de la bordure
$G_{chc}$ et ce même s'il généralise le tuple
$\tup{\text{Scolaire, Paris, Hachette}}$ de la bordure $S_{chc}$.
\end{example}
\begin{table}[H] \begin{center} \label{ex_camc}
\begin{tabular}[h]{c|c}  \toprule
 $S_{camc}$ & \begin{tabular}{c}
$\tup{\text{Roman, Marseille, ALL}}$ \\ $\tup{\text{ALL, Marseille, Hachette}}$ \\
$\tup{\text{Scolaire, Paris, Hachette}}$  \\
          \end{tabular} \\ \hline
$S_{chc}$ & \begin{tabular}{c}
$\tup{\text{Roman, Marseille, ALL}}$ \\ $\tup{\text{ALL, Marseille, Hachette}}$ \\
$\tup{\text{Scolaire, Paris, Hachette}}$  \\
\end{tabular} \\ \hline
 $G_{cmc}$ & \begin{tabular}{c}
$\tup{\text{Roman, ALL, ALL}}$ \\ $\tup{\text{Essai, ALL, ALL}}$ \\
$\tup{\text{Scolaire, ALL, ALL}}$ \\ $\tup{\text{ALL, Marseille, ALL}}$ \\
$\tup{\text{ALL, ALL, Gallimard}}$  \\
               \end{tabular} \\ \hline
 $G_{chc}$ & \begin{tabular}{c}
$\tup{\text{Roman, ALL, ALL}}$ \\ $\tup{\text{Scolaire, ALL, ALL}}$ \\
 $\tup{\text{ALL, Marseille, ALL}}$ \\
         \end{tabular} \\ \bottomrule
\end{tabular}
\caption{\label{bornes:chc} Bornes du cube convexe pour <<~$3 \leq
\textsc{Sum}(\textit{Quantité}) \leq 6$~>>} \end{center}
\end{table}

\section{Formalisation de cubes existants} 
\label{SectionCubesExist}

Dans ce paragraphe, nous passons en revue différentes variantes
des cubes de données et, en utilisant la structure de cube
convexe, nous en proposons une caractérisation à la fois solide et
simple.


\subsection{Cubes de données} \label{subsec:datacub}

Originellement proposé dans \cite{GCBLRVPP97},
le cube de données selon un ensemble de dimensions est présenté comme le
résultat de tous les {\sc Group By} qu'il est possible de formuler
selon une combinaison de ces dimensions. Le résultat de chaque
{\sc Group By} est appelé un cuboïde et l'ensemble de tous les
cuboïdes est structuré au sein d'une relation notée
$Datacube(r)$. Le schéma de cette relation reste le même que celui
de $r$, à savoir $\m{D} \cup \m{M}$ et c'est ce même schéma qui
est utilisé pour tous les cuboïdes (afin de pouvoir en faire
l'union) en mettant en \oe uvre une idée simple : toute dimension
ne participant pas au calcul d'un cuboïde (i.e. ne figurant pas
dans la clause {\sc Group By}) se voit attribuer la valeur ALL.

Pour tout ensemble d'attributs $X \subseteq \m{D}$, un cuboïde du
cube de données, noté $Cuboid(X,f(\{\m{M}|*\}))$, peut être obtenu
comme suit en utilisant une requête {\sc Sql}:
\begin{verbatim}
SELECT [ALL,] X, f({M|*})
    FROM r
    GROUP BY X;
\end{verbatim}

Ainsi, nous obtenons deux requêtes {\sc Sql} pour exprimer le
calcul d'un cube de données : \begin{enumerate} \item soit en
utilisant l'opérateur {\sc Group By Cube}
  (ou {\sc Cube By} selon le \SGBD) :  \begin{verbatim}
 SELECT D, f({M|*})
    FROM r
    GROUP BY CUBE (D);
\end{verbatim}
\item  soit en
 faisant l'union de tous les cuboïdes : \\ $Datacube(r,f(\{\m{M}|*\})) = \bigcup\limits_{X \subseteq \m{D}}
 Cuboid(X,f(\{\m{M}|*\}))$. Cette requête s'exprime comme suit en
 \SQL~:
 \begin{verbatim}
 SELECT  ALL, ..., f({M|*})
    FROM r
 UNION
 SELECT A, ALL, ..., f({M|*})
    FROM r
    GROUP BY A
 UNION
  ...
\end{verbatim}
 \end{enumerate}

\begin{example} -
Dans notre exemple, l'ensemble de toutes les requêtes agrégatives
peut être exprimé en utilisant l'opérateur \textsc{Group By Cube} comme
suit : 
\begin{verbatim}
SELECT Type,Ville,Éditeur,SUM(Quantite)
    FROM Document1
    GROUP BY CUBE Type, Ville, Éditeur;
\end{verbatim}

Cette requête a pour résultat le calcul de $2^3 = 8$ cuboïdes :
$TVE, TE, TV, VE, T, V, E$ et $\vide$ (en considérant les
initiales des attributs). Le cuboïde selon $TVE$ correspond à la
relation initiale elle-même.
\end{example}

Un tuple $t$ appartient au cube de données d'une relation $r$ si
et seulement s'il existe au moins un tuple $t'$ de $r$ qui
spécialise $t$; sinon $t$ ne peut pas être construit. Par
conséquent, quelle que soit la fonction agrégative, les tuples
constituant le cube de données restent invariants, seules les
valeurs calculées par la fonction agrégative changent.

\begin{prop} \label{chap:pb:th:datacube_treillis
cube_contraint}- Soit $r$ une relation projetée sur $\m{D}$,
l'ensemble des tuples ($i.e.$ hormis les valeurs des attributs
mesures) constituant le cube de données de $r$ est un cube convexe
pour la contrainte <<~{\sc Count}(*) $\geq 1$~>> :
$$Datacube(r) = \{ t \in CL(r) \mid t[ Count(*)] \geq 1 \} $$
\end{prop}

\begin{proof}
- D'après la définition d'un cube de données: $t \in datacube(r)
\Leftrightarrow \exists t' \in r\ \mid\ t \preceq_g t'
\Leftrightarrow$ $t[$\Count$(*)] \geq 1$ d'après la définition de
la fonction \Count.
\end{proof}

Puisque la contrainte <<~{\sc Count}(*) $\geq 1$~>> est une
contrainte antimonotone (selon $\preceq_g$), un cube de données
est un cube convexe. En appliquant le théorème
\ref{chap:cl:th:convexe}, nous déduisons que tout cube de données
peut être représenté par deux bordures : la relation $r$ qui est
le majorant et le tuple $\tupall{}$ qui est le minorant. Ainsi,
nous pouvons facilement tester l'appartenance d'un tuple
quelconque $t$ au cube de données de $r$ : il suffit de trouver un
tuple $t' \in r$ qui spécialise $t$.

\begin{example}
- Avec la relation exemple \textsc{Document}$_1$ \cftab{tab:r1}, le
tuple (Roman, Marseille, ALL) appartient au cube de données car il
est spécialisé par le tuple (Roman, Marseille, Gallimard) de la relation
initiale.
\end{example}

Dans les sous-paragraphes suivants, en nous appuyant toujours sur
la structure de cube convexe, nous proposons une caractérisation
des différentes déclinaisons du cube de données.


\subsection{Cubes icebergs}

En  s'inspirant des motifs fréquents, \cite{BR99} introduit les
cubes icebergs qui sont présentés comme des sous-ensembles de
tuples du cube de données satisfaisant, pour les valeurs de la
mesure, une contrainte de seuil minimum. L'objectif sous-jacent
est triple. Il s'agit dans ce cas d'exhiber les tendances
suffisamment générales pour être pertinentes pour le décideur, il
en découle deux intérêts techniques importants : ne pas calculer
ni matérialiser la totalité du cube d'où un gain notable à la fois
de temps d'exécution et d'espace disque.  La requête
\SQL~permettant de calculer un cube iceberg prend la forme
suivante :
\begin{verbatim}
 SELECT D, f({M|*})
    FROM r
    GROUP BY CUBE D
    HAVING f({M|*}) >= MinSeuil;
\end{verbatim}

En nous appuyant sur la définition du cube convexe, nous
formalisons le concept de cube iceberg dans la proposition
suivante :
\begin{prop}
- La contrainte <<~\verb+f({M|*}) >= MinSeuil+~>> étant une
contrainte antimonotone \cite{PH02}, le cube iceberg est un cube
convexe caractérisé comme suit :   
$$CubeIceberg(r) = \{ t \in CL(r) \mid t[f(\{M|*\})] \geq  MinSeuil \}.$$
\end{prop}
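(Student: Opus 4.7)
The plan is to derive the claim as a direct consequence of Theorem~\ref{chap:cl:th:convexe}. First I would translate the \SQL~query defining the iceberg cube into the cube lattice: combining \textsc{Group By Cube} with the \textsc{Having} filter, a tuple $t \in CL(r)$ lies in $CubeIceberg(r)$ if and only if it belongs to $Datacube(r)$ and its aggregated value satisfies $t[f(\{M|*\})] \geq MinSeuil$. By the preceding proposition, the first clause is already the antimonotone constraint $t[\Count(*)] \geq 1$ on $CL(r)$, so the iceberg cube is exhibited as the set of tuples satisfying a conjunction of two antimonotone constraints.

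Next I would verify that the threshold constraint $\mathit{Const}(t) \equiv t[f(\{M|*\})] \geq MinSeuil$ is itself antimonotone with respect to $\preceq_g$, invoking the classical argument of~\cite{PH02} for the usual additive aggregates (\SUM~on non-negative measures, \COUNT, \textsc{Max}): if $t \preceq_g u$, then every base tuple of $r$ that specialises $u$ also specialises $t$, so the aggregate computed at $t$ dominates the one computed at $u$ and hence $\mathit{Const}(u) \Rightarrow \mathit{Const}(t)$. Whenever $MinSeuil$ is strictly positive, the threshold clause moreover subsumes the Count$(*) \geq 1$ clause, which reduces the defining condition to the single constraint displayed in the proposition.

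Finally, since a conjunction of antimonotone constraints is itself antimonotone, case~(2) of Theorem~\ref{chap:cl:th:convexe} applies and yields that $CubeIceberg(r)$ is a convex cube, with bounds $G_{camc} = \tupall{}$ and $S_{camc} = \max_{\preceq_g} CubeIceberg(r)$; this delivers both the convex-cube status and the border-based condensed representation. The main obstacle is the antimonotonicity check itself: it depends on the sign conventions for the measure when $f$ is \SUM, and on a symmetric treatment when $f$ is \textsc{Min}; once this standard assumption of the iceberg-cube literature is made precise, everything else is a routine instantiation of the convex-cube machinery developed in Section~\ref{Section:CubeConvexe}.
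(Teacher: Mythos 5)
Your argument is correct and follows exactly the route the paper intends: the paper gives no explicit proof for this proposition, relying only on the cited antimonotonicity of the threshold constraint together with Theorem~\ref{chap:cl:th:convexe}, which is precisely what you instantiate. Your additional steps (reducing the conjunction with \textsc{Count}$(*) \geq 1$ to the single threshold clause when $MinSeuil > 0$, and flagging that \textsc{Min} requires a symmetric treatment) merely make explicit details the paper leaves implicit, so there is no substantive divergence.
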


\begin{example}
- Avec la relation exemple \textsc{Document}$_1$ \cftab{tab:r1}, la
bordure $S$ relative à la contrainte <<~\SUM(Quantité) $\geq$ 3~>>
est composée des trois tuples suivants : $ \{$ (Roman, Marseille,
ALL), (ALL, Marseille, Hachette), (Roman, Paris, Hachette) $\}$. La
bordure $G$ est uniquement composée du tuple ne contenant que des
valeurs $\ALL$.
\end{example}

\subsection{Cubes intervallaires}

Le cube intervallaire ne contient que les tuples du cube de
données pour lesquels les valeurs de la mesure sont comprises dans
un intervalle donné. La requête \SQL~permettant de calculer un tel
cube est la suivante :

\begin{verbatim}
 SELECT D, f({M|*})
 FROM r
 GROUP BY CUBE D
 HAVING f({M|*}) BETWEEN MinSeuil AND MaxSeuil;
\end{verbatim}

Dans le cadre de travail établi, la caractérisation du cube
intervallaire est la suivante :

 \begin{prop} - La contrainte <<~\verb+f({M|*}) >= MinSeuil+~>>
étant une contrainte antimonotone \cite{PH02} et la contrainte 
<<~\verb+f({M|*}) <= MaxSeuil+~>> étant une contrainte monotone, le
cube intervallaire est un cube convexe. Ainsi, nous avons :
$$CubeIntervalaire(r) = \{ t \in CL(r) \mid MaxSeuil \geq
t[f(\{M|*\})] \geq MinSeuil \}.$$
\end{prop}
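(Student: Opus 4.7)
Le plan est d'observer que la contrainte intervallaire est par construction une conjonction hybride $chc = cmc \wedge camc$ au sens de la D\'efinition rappel\'ee dans le paragraphe \ref{Section:CubeConvexe}. La premi\`ere in\'egalit� \verb+f({M|*}) >= MinSeuil+ n'est autre que la contrainte d\'ej\`a identifi\'ee comme antimonotone pour le cube iceberg dans la proposition pr\'ec\'edente ; sa duale \verb+f({M|*}) <= MaxSeuil+ est monotone par le m\^eme argument retourn\'e. Tout revient donc \`a constater ces deux propri\'et\'es puis \`a invoquer le Th\'eor\`eme \ref{chap:cl:th:convexe}.

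Premi\`erement, je rappellerais bri\`evement pourquoi les agr\'egats additifs usuels se comportent convenablement vis-\`a-vis de $\preceq_g$. Si $t \preceq_g u$, alors $u$ est plus sp\'ecifique que $t$, de sorte que les tuples de $r$ sp\'ecialisant $u$ sont aussi des sp\'ecialisations de $t$, d'o\`u $t[\COUNT(*)] \geq u[\COUNT(*)]$ et, pour des mesures positives, $t[\SUM(M)] \geq u[\SUM(M)]$ (un argument analogue couvre \SUM\ sign\'ee vue comme diff\'erence de deux mesures positives, ainsi que les fonctions \textsc{Min} et \textsc{Max}). Il en d\'ecoule imm\'ediatement que \verb+f({M|*}) >= MinSeuil+ satisfait la d\'efinition d'antimonotonie ($t \preceq_g u \wedge Const(u) \Rightarrow Const(t)$) et, sym\'etriquement, que \verb+f({M|*}) <= MaxSeuil+ est monotone.

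Deuxi\`emement, j'appliquerais directement le cas $3$ du Th\'eor\`eme \ref{chap:cl:th:convexe} : la conjonction de ces deux contraintes d\'efinit un cube convexe $CC(r)_{chc}$ born\'e par $G_{chc}$ et $S_{chc}$. L'identification
\[
CubeIntervalaire(r) = \{ t \in CL(r) \mid MinSeuil \leq t[f(\{M|*\})] \leq MaxSeuil \}
\]
se lit alors directement dans la s\'emantique de la clause \verb+HAVING f({M|*}) BETWEEN MinSeuil AND MaxSeuil+ de la requ\^ete \SQL\ pr\'esent\'ee, par double inclusion triviale avec $CC(r)_{chc}$.

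L'obstacle n'est pas logique mais tient \`a la rigueur des hypoth\`eses implicites de la Remarque pr\'ec\'edant le Th\'eor\`eme \ref{chap:cl:th:convexe} : il faut s'assurer que $\tupall{}$ v\'erifie la partie antimonotone et $\tupvide{}$ la partie monotone, ce qui suppose que $MinSeuil$ est choisi assez petit et $MaxSeuil$ assez grand pour que l'espace de solutions soit non trivial ; dans le cas d\'eg\'en\'er\'e contraire, $CC(r)_{chc}$ se r\'eduit \`a $\{\tupvide{}\}$ ou \`a $Space(r)$ et la caract\'erisation reste formellement correcte. Une fois cette v\'erification faite, la preuve est essentiellement r\'eduite \`a l'application successive des Propositions ant\'erieures et du cas $3$ du th\'eor\`eme de structure, sans calcul suppl\'ementaire.
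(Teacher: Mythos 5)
Your proposal is correct and takes essentially the same route as the paper, which states this proposition without a separate proof and relies, exactly as you do, on the cited antimonotonicity of $f(\{M|*\}) \geq MinSeuil$, the dual monotonicity of $f(\{M|*\}) \leq MaxSeuil$, and case 3 of the Th\'eor\`eme~\ref{chap:cl:th:convexe}. The one shaky point is your parenthetical claim that signed \SUM{} is covered by viewing it as a difference of two positive measures --- a difference of antimonotone functions need not be antimonotone --- but this aside is dispensable, since the paper itself restricts \SUM{} to strictly positive measure values.
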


\begin{example}
- Avec la relation exemple \textsc{Document}$_1$ \cftab{tab:r1}, les
bordures $S$ et $G$ relatives à la contrainte <<~\SUM(Quantité)
$\in [3,6]$~>> sont données dans le tableau \ref{bornes:interval}.
\end{example}
\begin{table}[H] \begin{center}
\begin{tabular}[h]{c|c}  \toprule
 $S$ & \begin{tabular}{c}
$\tup{\text{Roman, Marseille, ALL}}$ \\ $\tup{\text{ALL, Marseille, Hachette}}$ \\
$\tup{\text{Scolaire, Paris, Hachette}}$  \\
          \end{tabular} \\ \hline
 $G$ & \begin{tabular}{c}
$\tup{\text{Roman, ALL, ALL}}$ \\ $\tup{\text{Scolaire, ALL, ALL}}$ \\
 $\tup{\text{ALL, Marseille, ALL}}$\\
         \end{tabular} \\ \bottomrule
\end{tabular}
\caption{\label{bornes:interval} Bornes du cube intervalaire pour
la contrainte  <<~$\textsc{Sum}(\textit{Quantité}) \in [3,6]$~>>}
\end{center}
\end{table}

\subsection{Cubes différentiels} \label{subsec:diff_cube}

Les cubes différentiels sont le résultat de la différence entre
les cubes de deux relations $r_1$ et $r_2$. Ils mettent en
évidence des tuples pertinents dans un des cubes et inexistants
dans l'autre. Leur intérêt est donc de pouvoir comparer des
tendances entre deux jeux de données. Par exemple, dans une
application distribuée où deux relations rassemblent des données
collectées dans des zones géographiques différentes, le cube
différentiel exhibe des tendances significatives dans une zone
mais inexistantes dans l'autre. Si l'on considère non plus la
dimension géographique mais la dimension temporelle, les cubes
différentiels permettent d'isoler des tendances fréquentes à un
instant et qui disparaissent ou des tendances inexistantes qui
apparaissent de manière significative. Considérons que la relation
originelle est $r_1$ et que les tuples alimentant l'entrepôt lors
d'un rafraîchissement sont stockés dans $r_2$, le cube
différentiel peut être obtenu par la requête \SQL~suivante :

\begin{verbatim}
 SELECT D, f({M|*})
    FROM r2
    GROUP BY CUBE D
    HAVING f({M|*}) >= MinSeuil
 MINUS
 SELECT D, f({M|*})
    FROM r1
    GROUP BY CUBE D;
\end{verbatim}

De manière cohérente avec les types de cubes précédents, nous
proposons une caractérisation des cubes différentiels.

\begin{prop} - La contrainte <<~\verb+f({M|*}) >= MinSeuil+~>> étant une
contrainte antimonotone \cite{PH02} et la contrainte <<~{\it
n'appartient pas au cube de données de la relation $r_1$}~>> étant
une contrainte monotone ($cf.$ paragraphe \ref{subsec:datacub} et
application du principe de dualité \cite{FCA}), le cube
différentiel est un cube convexe. Ainsi, nous avons : 
\begin{displaymath}
\begin{array}{lcl}
CubeDifferentiel(r_2,r_1) &=&\{ t \in CL(r_1 \cup r_2) \mid
t[f(\{M|*\})] \geq MinSeuil \\ &&\text{ et } \nexists t' \in r_2 \mid t
\preceq_g t' \}. 
\end{array}
\end{displaymath}

\end{prop}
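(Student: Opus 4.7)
My approach is to show that the definition of $CubeDifferentiel(r_2,r_1)$ is a hybrid conjunction of one monotone and one antimonotone constraint over the common lattice $CL(r_1 \cup r_2)$, and then apply directly case~3 of Theorem~\ref{chap:cl:th:convexe}. First, I would isolate the two sub-constraints appearing in the set-builder expression: the iceberg-type lower-bound condition $camc(t) \equiv [\,t[f(\{M|*\})] \geq MinSeuil\,]$ (evaluated on $r_2$), and the exclusion condition $cmc(t) \equiv [\,\nexists t' \in r_1 \mid t \preceq_g t'\,]$. The antimonotonicity of $camc$ is exactly the one justified in the preceding subsection on iceberg cubes and rests on the result of \cite{PH02}, so no new work is required for this part.

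The key step is then to establish that $cmc$ is monotone with respect to $\preceq_g$. Here I would invoke the characterisation of the data cube given in the preceding proposition: belonging to $Datacube(r_1)$ is equivalent to the antimonotone constraint $t[\Count(*)] \geq 1$, where the count is taken over $r_1$. The exclusion condition is the logical negation of this membership, and by the duality principle of \cite{FCA} the complement of an antimonotone predicate on a lattice is monotone. An explicit elementary check is also available: if $t \preceq_g u$ and no $t' \in r_1$ specialises $t$, then by transitivity of $\preceq_g$ no $t' \in r_1$ can specialise $u$ either, since such a $t'$ would \emph{a fortiori} specialise $t$ and contradict the hypothesis.

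Having exhibited a genuine hybrid conjunction $chc = camc \wedge cmc$, the conclusion follows by direct application of Theorem~\ref{chap:cl:th:convexe}, case~3: the solution set is a convex cube and admits the condensed representation by its two borders $S_{chc}$ and $G_{chc}$. The main subtlety — and the point I would verify most carefully — is that the two constraints live in the same underlying lattice even though they refer to different source relations ($r_2$ for the aggregate evaluation and $r_1$ for the exclusion test). Working in $CL(r_1 \cup r_2)$ makes both of them predicates on the tuples of a single multidimensional space, so no conflict arises and Theorem~\ref{chap:cl:th:convexe} applies without modification, completing the proof.
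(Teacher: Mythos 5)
Your proposal is correct and follows essentially the same route as the paper, which in fact gives no separate proof for this proposition but justifies it inline exactly as you do: the threshold constraint is antimonotone by \cite{PH02}, the exclusion constraint is monotone as the negation (by duality) of the antimonotone membership constraint $t[\textsc{Count}(*)] \geq 1$ established for the data cube, and the hybrid case of the convex-cube theorem then yields the result; your explicit transitivity check of the monotonicity is a small welcome addition that the paper leaves implicit. Note only that your (correct) reading of the exclusion condition as $\nexists t' \in r_1 \mid t \preceq_g t'$ matches the prose of the proposition and the \textsc{Sql} query, whereas the paper's displayed formula writes $r_2$ there, which appears to be a typo.
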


\begin{example}
- Soit la relation {\sc Document$_2$} suivante :
\begin{table}[H] \begin{center}
\begin{tabular}{c|ccc|c}
\toprule RowId&    Type&      Ville&       Éditeur&   Quantité\\
\midrule
    1&     Scolaire&  Marseille & Gallimard &      3\\
    2&     Scolaire&  Paris     & Hachette  &      3\\
    3&     Scolaire&  Marseille & Hachette  &      1\\
    4&     Roman   &  Marseille & Gallimard &      3\\
    5&     Essai   &  Paris     & Hachette  &      2\\
    6&     Essai   &  Paris     & Gallimard &      2\\
    7&     Essai   &  Marseille & Hachette  &      1\\
\bottomrule
\end{tabular}
\caption{Relation exemple \textsc{Document$_2$}} \label{tab:r2}
\end{center} \end{table}

Les bordures $S$ et $G$ du cube différentiel entre les relations
\textsc{Document$_2$} et \textsc{Document$_1$}, pour le seuil $MinSeuil
= 1/15$ et la fonction agrégative {\sc Sum}, sont données dans le
tableau \ref{bornes:differentiel}.

\begin{table}[H] \begin{center}
\begin{tabular}[h]{c|c}  \toprule
 $S$ & \begin{tabular}{c}
$\tup{\text{Scolaire, Marseille, Gallimard}}$ \\
 $\tup{\text{Scolaire, Marseille, Hachette}}$ \\
$\tup{\text{Essai, Paris, Gallimard}}$ \\
          \end{tabular} \\ \hline
 $G$ & \begin{tabular}{c}
$\tup{\text{Essai, ALL, Gallimard}}$ \\ $\tup{\text{Scolaire, Marseille, ALL}}$ \\
 $\tup{\text{Scolaire, ALL, Gallimard}}$ \\   $\tup{\text{ALL, Paris, Galimard}}$ \\
         \end{tabular} \\ \bottomrule
\end{tabular}
\caption{\label{bornes:differentiel} Bornes du cube différentiel}
\end{center}
\end{table}

\end{example}

\begin{remarque}
- Dans un souci d'homogénéité, nous terminons ce paragraphe en
donnant la requête \SQL~générique qui correspond au calcul du cube
convexe :

\begin{verbatim}
 SELECT D, f({M|*})
 FROM r
 GROUP BY CUBE D
 [HAVING condition(s) anti-monotone(s)
        AND Condition(s) monotone(s)];
\end{verbatim}
\end{remarque}


\section{Cubes émergents} \label{SectionCubesEmerg}

Dans ce paragraphe, nous introduisons le concept de cube émergent.
De tels cubes exhibent des tendances non pertinentes pour
l'utilisateur (parce qu'en deçà d'un seuil) mais qui le deviennent
ou au contraire des tendances significatives qui s'atténuent sans
forcément disparaître. Les cubes émergents permettent donc
d'élargir les résultats des cubes différentiels en affinant les
comparaisons entre deux cubes. Ils sont tout aussi intéressants
dans un contexte \OLAP~que pour l'analyse de flots de données car
ils mettent en évidence des renversements de tendances.

 Dans la suite, nous considérons
uniquement les fonctions agrégatives \COUNT~et \SUM. Pour
conserver la propriété d'antimonotonie de \SUM, nous supposons que
toutes les valeurs prises par la mesure sont strictement positives
et introduisons une version relative de ces fonctions.

\begin{definition}{Fonction agrégative relative}
Soit $r$ une relation, $t \in CL(r)$ un tuple, et $f \in \{$\SUM,
\COUNT$\}$ une fonction agrégative. On appelle $f_{rel}(.,r)$ la
fonction agrégative relative de la fonction $f$ pour la relation
$r$. $f_{rel}(t,r)$ est le ratio entre la valeur de $f$ pour le
tuple $t$ et la valeur de $f$ appliquée sur toute la relation $r$
(donc pour le tuple $\tupall{}$).
$$ f_{rel}(t,r) = \frac{f(t,r)}{f(\tupall,r)}$$
\end{definition}

Par exemple, la fonction \COUNT$_{rel}(t,r)$ correspond simplement
à $Freq(t,r)$ (la fréquence d'apparition du motif
multidimensionnel $t$ dans la relation $r$).

\begin{remarque}
- Étant donné que $f$ est additive et que les valeurs prises par
la mesure sont strictement positives, on a $0 < f(t, r) < f(\tupall, r)$
et par conséquent $0 < f_{rel}(t,r) < 1$
\end{remarque}

Disposant de deux jeux de données unicompatibles $r_1$ et $r_2$, nous nous intéressons 
aux tendances <<~non significatives~>> dans $r_1$ mais qui le deviennent dans $r_2$
dans des proportions pertinentes pour l'utilisateur.
Les entrepôts aussi bien que les flots de données incluant nécessairement 
la dimension chronologique, $r_1$ et $r_2$ peuvent être typiquement vus comme
des ensembles de tuples pourvus d'estampilles temporelles différentes.
Ainsi, $r_1$ peut déjà être stockée dans l'entrepôt et $r_2$ rassembler les nouveaux
tuples à insérer lors d'un rafraîchissement.
$r_1$ et $r_2$ peuvent aussi correspondre à des ensembles de tuples
collectés lors de deux intervalles de temps. 
Même si la dimension temps est prépondérante dans notre contexte de travail, 
elle n'est pas nécessairement le critère de comparaison entre $r_1$ et $r_2$. 
Ainsi, dans une application distribuée, les tuples de $r_1$ et $r_2$
peuvent être collectés sur deux sites différents. Les tuples émergents de $r_1$ vers $r_2$
peuvent être simplement caractérisés par deux contraintes de seuil.

\begin{definition}{Tuple émergent}
Un tuple $t\in CL(r_2\cup r_1)$ est dit émergent de $r_1$ vers $r_2$ si et
seulement s'il satisfait les deux contraintes suivantes :
\begin{itemize}
\item[$(C_1)$] $f_{rel}(t,r_1) \leq MinSeuil_1 $
\item[$(C_2)$] $f_{rel}(t,r_2) \geq MinSeuil_2 $
\end{itemize}
où $MinSeuil_1$ et $MinSeuil_2 \in\ ]0,1[$
\end{definition}

\begin{example}
- Soit les seuils $MinSeuil_1 = 1/3$ pour la relation {\sc
Document$_1$} \cftab{tab:r1} et $MinSeuil_2 = 1/5$ relatif à la
relation {\sc Document$_2$} \cftab{tab:r2}, le tuple $t_1 =
\tup{\text{Essai, Paris, ALL}}$ est émergent de {\sc Document$_1$} vers {\sc Document$_2$} 
car $SUM_{rel}(t_1, r_1) = 1/12$ et $SUM_{rel}(t_1, r_2) = 4/15$. 
Par contre, le tuple $t_2 =$(Essai, Marseille, ALL) ne l'est pas car 
$SUM_{rel}(t_2, r_2) = 1/15$.
\end{example}

\begin{definition}{Cube émergent}
Nous appelons cube émergent l'ensemble de tous les tuples de
$CL(r_2\cup r_1)$ émergents de $r_1$ vers $r_2$. \\
Le cube émergent, noté $CubeEmergent(r_2,r_1)$, est un cube
convexe avec la contrainte hybride $(C_1 \wedge C_2)$ \emph{<<~$t$
est émergent de $r_1$ vers $r_2$~>>}. Il est donc défini ainsi :
 $CubeEmergent(r_2,r_1) =
\{t \in CL(r_1\cup{}r_2) \mid C_1(t) \wedge C_2(t) \},$ avec
$C_1(t) = f_{rel}(t,r_1) < MinSeuil_1 $ et $C_2(t) =
f_{rel}(t,r_2) \geq MinSeuil_2$.
\end{definition}

$CubeEmergent(r_2,r_1)$ est un cube convexe avec une conjonction
de contraintes monotone ($C_1$) et antimonotone ($C_2$). On peut
donc utiliser ses bordures ({\it cf} théorème
\ref{chap:cl:th:convexe}) pour répondre efficacement à la question
\emph{<<~un tuple $t$ est il émergent?~>>}.

\begin{definition}{Taux d'émergence}
Soit $t \in CL(r_1 \cup r_2)$ un tuple
et $f$ une fonction additive (appliquée sur les valeurs toutes positives de la mesure).
Nous notons $TE(t)$ le taux d'émergence de $t$ entre $r_1$ et $r_2$ et nous
le définissons ainsi :
\begin{displaymath}
TE(t) =
\left
\lbrace 
\begin{array}{llll}
0 & si f_{rel}(t,r_1) = 0& ~et~ &f_{rel}(t,r_2) = 0\\
\infty & si f_{rel}(t,r_1) = 0& ~et~ &f_{rel}(t,r_2) \neq 0\\
\dfrac{f_{rel}(t,r_2)}{f_{rel}(t,r_1)}&sinon
\end{array} \right.
\end{displaymath}
\end{definition}

\`A l'instar de la mesure de corrélation \cite{HaKa01}, lorsqu'un
tuple a un taux d'émergence strictement supérieur à 1, il est
positivement émergent, sinon il est négativement émergent.

\begin{example}
- Le tableau \ref{tab:cube_emergent}  présente l'ensemble des
tuples émergents de \textsc{Document}$_1$ vers \textsc{Document}$_2$ en considérant
la mesure Quantité et les seuils $MinSeuil_1 = 1/3$ et
$MinSeuil_2 = 1/5$.

\begin{table}[ht] \begin{center}
\begin{tabular}{ccc|c}
\toprule     Type&      Ville&       Éditeur&         TE({\it Quantité})\\
\midrule
           Scolaire &  Marseille & Gallimard &   $\infty$\\
           Scolaire &  Marseille & ALL       &   $\infty$\\
           Scolaire &  ALL       & Gallimard &   $\infty$\\
           ALL      &  Marseille & Gallimard &   2.4 \\
           ALL      &  ALL       & Gallimard &   2.4 \\
           Roman    &  Marseille & Gallimard &   1.2 \\
           Roman    &  ALL       & Gallimard &   1.2\\
           Essai    &  Paris     & ALL       &   3.2\\
           Essai    &  ALL       & ALL       &   2\\
\bottomrule
\end{tabular}
\caption{Ensemble des tuples émergents de \textsc{Document}$_1$ vers
\textsc{Document}$_2$} \label{tab:cube_emergent}
\end{center} \end{table}
\end{example}

On observe que le taux d'émergence, quand il est supérieur à 1 (donc positivement émergent),
permet de caractériser les tendances significatives dans $r_2$ et
qui ne sont pas aussi marquées dans $r_1$. Quand ce taux est
inférieur à 1, il met en évidence les tendances immergentes, {\it
i.e.} significatives dans $r_1$ et peu présentes ou inexistantes
dans $r_2$.

\begin{example} - Dans les deux relations données en exemple, on a $TE($Scolaire, ALL, ALL$) = 2.5$.
\'Evidemment, plus le taux d'émergence est élevé, plus la tendance
est forte. Ainsi, le tuple ci-avant indique un bond pour la vente
de livres scolaires entre \textsc{Document}$_1$ et \textsc{Document}$_2$.
\end{example}

\begin{prop}
- Soit $MinRatio = \frac{MinSeuil_2}{MinSeuil_1}$, $\forall t \in
CubeEmergent(r_2,r_1)$, on a $TE(t) \geq MinRatio$.
\end{prop}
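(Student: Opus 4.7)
The plan is to unfold the definitions and split into the two cases in which the rate $TE(t)$ is defined.

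First I would fix an arbitrary $t \in CubeEmergent(r_2, r_1)$ and record what membership gives us: $C_1(t)$ yields $f_{rel}(t, r_1) \leq MinSeuil_1$, and $C_2(t)$ yields $f_{rel}(t, r_2) \geq MinSeuil_2$. Both $MinSeuil_1$ and $MinSeuil_2$ lie in $]0,1[$, so in particular $MinSeuil_2 > 0$ and thus $f_{rel}(t, r_2) > 0$. This already rules out the first branch of the definition of $TE$ (which requires both relative frequencies to vanish).

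The remaining two branches split on whether $f_{rel}(t, r_1)$ is zero or strictly positive. If $f_{rel}(t, r_1) = 0$, the definition gives $TE(t) = \infty$, which trivially dominates $MinRatio$. Otherwise $f_{rel}(t, r_1) > 0$ and $TE(t) = f_{rel}(t, r_2) / f_{rel}(t, r_1)$; here I would use the two bounds in sequence, first replacing the numerator by the lower bound $MinSeuil_2$ (legitimate since the denominator is positive) and then enlarging the fraction further by replacing the denominator by its upper bound $MinSeuil_1$, obtaining $TE(t) \geq MinSeuil_2 / MinSeuil_1 = MinRatio$.

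There is essentially no obstacle: the statement is an immediate consequence of the monotonicity of division under positive bounds, together with the convention $TE(t)=\infty$ when $f_{rel}(t,r_1)=0$. The only subtlety worth pointing out explicitly is that the hypothesis $MinSeuil_1, MinSeuil_2 \in\ ]0,1[$ ensures the ratio $MinRatio$ is well defined and that the case $f_{rel}(t, r_2) = 0$ cannot occur for an emergent tuple, so the pathological first branch of $TE$ never applies here.
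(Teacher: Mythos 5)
Your proof is correct and follows essentially the same route as the paper's: the core step is the identical chain of inequalities $f_{rel}(t,r_1) \leq MinSeuil_1$ and $f_{rel}(t,r_2) \geq MinSeuil_2$ combined by monotonicity of division to give $TE(t) \geq MinSeuil_2/MinSeuil_1$. Your explicit treatment of the degenerate branches of $TE$ (ruling out the case $f_{rel}(t,r_2)=0$ and handling $f_{rel}(t,r_1)=0$ via $TE(t)=\infty$) is a small refinement the paper silently skips, but it does not change the argument.
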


\begin{proof} - $f_{rel}(t,r_1) \leq MinSeuil_1 $  $\Rightarrow \dfrac{1}{f_{rel}(t,r_1)} \geq \dfrac{1}{MinSeuil_1}$, or $f_{rel}(t,r_2) \geq MinSeuil_2 $ \\
  $\Rightarrow \dfrac{f_{rel}(t,r_2)}{f_{rel}(t,r_1)} \geq \dfrac{MinSeuil_2}{MinSeuil_1}$ \\
  $\Rightarrow TE(t) \geq MinRatio$ $\Box$
\end{proof}

Tous les tuples émergents de notre exemple, ont un taux d'émergence supérieur à
$3/5$. Ceux qui ont un taux d'émergence strictement supérieur à 1
sont positivement émergents, les autres sont donc immergents.

Le cube émergent étant un cube convexe, il peut être représenté
par ses bordures et donc sans avoir à calculer ni matérialiser les
deux cubes comparés. Cette capacité est particulièrement
attrayante car elle permet d'isoler les renversements de tendances
extrêmement rapidement et à moindre coût.

\subsection{Transversaux cubiques} \label{subsection:ctr}

 Nous présentons le concept de transversaux cubiques \cite{CCL03_KDD} qui est
un cas particulier des transversaux d'un hypergraphe \cite{Ber89,EG95,GKMT97} dans 
le contexte du treillis cube.

\begin{definition}{Transversal cubique} \label{def:rel_trans} Soit
$T$ un ensemble de tuples ($T \subseteq CL(r)$) et soit $t \in T$
un tuple, $t$ est un transversal cubique de $T$ sur $CL(r)$ 
si et seulement si $t$ est un transversal cubique et 
$\forall t' \in T, t' \text{ est un transversal cubique et } t' \preceq_g t \Rightarrow t = t'$. 
Les minimaux transversaux cubiques de $T$ sont notés $cTr(T)$ et définis
comme suit :
$$cTr(T) = \min_{\preceq_g}(\{t \in CL(r) \mid
\forall t' \in r, t+t' \neq  \tupall{} \})$$

Soit $\ma$ une anti-chaîne de $CL(r)$ (tous les tuples de
 $\ma$ sont incomparables selon $\preceq_g$), l'ensemble des minimaux transversaux cubiques de
  $T$ peut être contraint en utilisant $\ma$. La nouvelle définition associée est la suivante :
 $$cTr(T,\ma) = \{t \in cTr(r) \mid \e u \in \ma : t \preceq_g u\}$$
\end{definition}

\begin{example}
- Avec la relation exemple {\sc Document$_1$}, nous avons le résultat
suivant : $cTr({\textsc{Document}_1}) = \{$ (Roman, Paris, ALL),
(Essai, ALL, Gallimard), (Scolaire, Marseille, ALL), (Scolaire, ALL, Gallimard), 
(ALL, Paris, Gallimard) $\}$.
\end{example}
Dans le prochain paragraphe, nous
montrons que l'on peut utiliser ce concept pour donner une nouvelle 
formulation des bordures du cube émergent.

\subsection{Calcul efficace des bordures} Pour calculer les bordures de
l'ensemble $CubeEmergent(r_2,r_1)$, nous reformulons les
contraintes de manière à tirer profit d'algorithmes existants qui
ont fait preuve de leur efficacité : ($i$) Max-Miner \cite{Bay98}
et GenMax \cite{GZ01} pour le calcul des maximaux cubiques, et
($ii$) Trans \cite{EG95}, CTR \cite{CCL03_KDD}, MCTR 
\cite{Cas04_dawak}  et \cite{GKMT97} pour le calcul des minimaux transversaux
cubiques. Nous ramenons la contrainte \emph{<<~t est un
tuple émergent~>>} 
à la recherche de maximaux cubiques fréquents et de minimaux
cubiques transversaux.

Il a été démontré que la contrainte $(C_1)$ est une contrainte
monotone et $(C_2)$ est une contrainte anti-monotone pour l'ordre
de généralisation.

L'ensemble des tuples émergents peut être représenté via deux
bordures : $S$ qui contient l'ensemble des tuple maximaux
émergents et $G$ englobant l'ensemble des tuples minimaux
émergents.

$$  \left\lbrace
\begin{array}{l}
G = \min_{\preceq_g}(\{t \in CL(r) \mid C_1(t) \wedge C_2(t)\})\\
\\
S = \max_{\preceq_g}(\{t \in CL(r) \mid C_1(t) \wedge C_2(t)\})
\end{array} \right.
$$

\begin{prop}
- Soit $M_1$ et $M_2$ les tuples maximaux fréquents des relations
$r_1$ et $r_2$ : \\ $M_1 = \max_{\preceq_g} ( \{ t\in CL(r_1)\ :\
f_{rel}(t,r_1) \geq MinSeuil_1 \}  )$\\
$M_2 = \max_{\preceq_g} ( \{ t \in CL(r_2)\ :\ f_{rel}(t,r_2) \geq
MinSeuil_2 \}  )$

\noindent Nous pouvons alors caractériser les bordures $S$ et $G$
de l'ensemble des tuples émergents comme suit : \begin{enumerate}
\item $G = cTr({M_1},M_2)$ sur $CL(r_1 \cup r_2)$,
\item $S = \{ t \in M_2\ : \exists u \in G\ : u \preceq_g t \}$.
\end{enumerate}
\end{prop}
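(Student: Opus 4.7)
Le plan est d'utiliser la caract�risation du cube convexe par ses bordures (Th�or�me~\ref{chap:cl:th:convexe}) apr�s avoir r�exprim� les contraintes $C_1$ et $C_2$ en termes des maximaux $M_1$ et $M_2$. Je commencerai par �tablir deux �quivalences cl�s dans $CL(r_1 \cup r_2)$. Puisque $M_1$ est la bordure des fr�quents de $r_1$, un tuple $t$ v�rifie $C_1(t)$ (il n'est pas fr�quent dans $r_1$) si et seulement si $\forall u \in M_1 : t \not\preceq_g u$, ce qui co\"incide avec la d�finition de transversal cubique de $M_1$. De fa�on duale, $C_2(t)$ �quivaut � $\exists u \in M_2 : t \preceq_g u$.

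Pour l'�galit� (1) $G = cTr(M_1, M_2)$, je proc�derai par double inclusion. Soit $t \in G$ : si un $t' \preceq_g t$, avec $t' \neq t$, v�rifiait $C_1(t')$, l'antimonotonie de $C_2$ combin�e � $C_2(t)$ donnerait $C_2(t')$, contredisant la minimalit� de $t$ dans $\{C_1 \wedge C_2\}$ ; donc $t$ est minimal parmi les tuples v�rifiant $C_1$, c'est-�-dire $t \in cTr(M_1)$, et $C_2(t)$ impose qu'il existe $u \in M_2$ avec $t \preceq_g u$, d'o� $t \in cTr(M_1, M_2)$. R�ciproquement, pour $t \in cTr(M_1, M_2)$, toute g�n�ralisation $t' \preceq_g t$ avec $t' \neq t$ satisfaisant $C_1 \wedge C_2$ contredirait la minimalit� de $t$ parmi les transversaux cubiques de $M_1$.

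Pour la caract�risation (2) de $S$, j'exploiterai la convexit� du cube �mergent. Si $t \in S$, alors aucun $t'' \succeq_g t$, avec $t'' \neq t$, ne peut v�rifier $C_2$, car la monotonie de $C_1$ h�riterait $C_1(t'')$ depuis $C_1(t)$, contredisant la maximalit� de $t$ dans $\{C_1 \wedge C_2\}$ ; par cons�quent $t$ est maximal parmi les tuples satisfaisant $C_2$, donc $t \in M_2$, et la convexit� fournit un $u \in G$ avec $u \preceq_g t$. R�ciproquement, si $t \in M_2$ et $u \in G$ v�rifie $u \preceq_g t$, alors $C_1(t)$ d�coule de $C_1(u)$ par monotonie, $C_2(t)$ est vrai par d�finition de $M_2$, et la maximalit� de $t$ dans $M_2$ interdit toute sp�cialisation stricte dans $\{C_1 \wedge C_2\}$.

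L'obstacle principal sera d'�tablir proprement l'�quivalence entre \emph{<<~$t$ n'est pas fr�quent dans $r_1$~>>} et \emph{<<~$t$ est un transversal cubique de $M_1$~>>}, car elle rel�ve du fait que les tuples fr�quents forment un filtre sous $\preceq_g$ engendr� par $M_1$. Une fois cette �quivalence acquise, les deux caract�risations des bordures d�coulent d'une combinaison routini�re de la monotonie de $C_1$, de l'antimonotonie de $C_2$ et de la convexit� de l'espace solution d�j� �tablie.
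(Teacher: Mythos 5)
Votre preuve est correcte et suit essentiellement la m\^eme d\'emarche que celle de l'article : ramener $C_1$ \`a <<~$t$ ne g\'en\'eralise aucun \'el\'ement de $M_1$~>> (c'est-\`a-dire aux transversaux cubiques de $M_1$), ramener $C_2$ \`a l'existence d'un majorant dans $M_2$, puis conclure via la monotonie de $C_1$, l'antimonotonie de $C_2$ et la convexit\'e du cube \'emergent. Votre traitement du point (2) est m\^eme plus d\'etaill\'e que celui de l'article, qui se borne \`a invoquer la convexit\'e sans v\'erifier explicitement l'inclusion $S \subseteq M_2$ ni la r\'eciproque.
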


\begin{proof}~~
\begin{enumerate}
\item
$t \in G \Leftrightarrow t \in \min_{\preceq_g} (\{ u \in CL(r_1 \cup r_2)\ : f_{rel}(u,r_1) \leq MinSeuil_1 \text{ et } f_{rel}(u,r_2) \geq MinSeuil_2 \})$ \\
$\Leftrightarrow t \in \min_{\preceq_g} (\{ u \in CL(r_1 \cup r_2)\ : f_{rel}(u,r_1) \leq MinSeuil_1 \})$ et
   $\exists v  \in M_2\ : t \preceq_g v$ \\
$\Leftrightarrow t \in \min_{\preceq_g} (\{ u \in CL(r_1 \cup r_2)\ :  \nexists v \in M_1\ : u \preceq_g v \}) \text{ et }   \exists v  \in M_2\ : t \preceq_g v$ \\
$\Leftrightarrow t \in cTr( {M_1} )$ et   $\exists v  \in M_2\ : t \preceq_g v$ \\
$\Leftrightarrow t \in cTr( {M_1}, M_2 )$

\item Vrai car $CubeEmergent(r_2,r_1)$ est un espace convexe; par conséquent tout tuple $t \in S$
spécialise au moins un tuple $v \in G$. $\Box$
\end{enumerate}
\end{proof}

\begin{remarque} - Cette nouvelle caractérisation utilisant les
minimaux transversaux cubiques s'applique, dans un contexte
binaire, aussi bien pour le calcul des motifs émergents
\cite{DL05}, que pour celui des bordures des motifs contraints
selon une conjonction hybride \cite{RK01} en utilisant le concept
classique de transversal \cite{EG95,Ber89}.
\end{remarque}

\begin{table}[H] \begin{center}
\begin{tabular}[h]{c|c}  \toprule
 $M_1$ & \begin{tabular}{c}
$\tup{\text{Roman, Marseille, Gallimard}}$ \\
 $\tup{\text{Scolaire, Paris, Hachette}}$ \\
          \end{tabular} \\ \bottomrule
\end{tabular}
\caption{Ensemble $M_1$ des tuples maximaux fréquents de \textsc{Document}$_1$} \label{tab:max_r1}
\end{center}
\end{table}

\begin{table}[H] \begin{center}
\begin{tabular}[h]{c|c}  \toprule
 $M_2$ & \begin{tabular}{c}
$\tup{\text{Scolaire, Marseille, Gallimard}}$ \\
 $\tup{\text{Scolaire, Paris, Hachette}}$ \\
$\tup{\text{Roman, Marseille, Gallimard}}$ \\
$\tup{\text{Essai, Paris, ALL}}$ \\
          \end{tabular} \\ \bottomrule
\end{tabular}
\caption{Ensemble $M_2$ des tuples maximaux fréquents de \textsc{Document}$_2$} \label{tab:max_r2}
\end{center}
\end{table}

\begin{example}
- Considérons les relations $r_1 = \textsc{Document}_1$ et
$r_2 = \textsc{Document}_2$. Les ensembles $M_1$ et $M_2$ sont donnés
dans les tableaux \ref{tab:max_r1} et \ref{tab:max_r2}.

Les bordures de $CubeEmergent(r_2, r_1)$ avec $MinSeuil_1 = 1/3$ et
$MinSeuil_2 = 1/5$ sont présentées dans les tableaux
\ref{tab:bordure_U} et \ref{tab:bordure_L}. Grâce à ces deux
bordures, nous pouvons affirmer que les livres scolaires se sont mieux
vendus à Marseille dans la deuxième relation car
\tup{\text{Scolaire, Marseille, ALL}} généralise le tuple
\tup{\text{Scolaire, Marseille, Hachette}} appartenant à $S$ et il est
généralisé par (Scolaire, Marseille, ALL) appartenant à $G$. En
revanche, on ne peut rien affirmer pour le tuple
$\tup{\text{Scolaire, ALL, Gallimard}}$ car il ne spécialise aucun tuple
de $G$.
\end{example}

\begin{table}[H] \begin{center}
\begin{tabular}[h]{c|c}  \toprule
 $S$ & \begin{tabular}{c}
$\tup{\text{Scolaire, Marseille, Gallimard}}$ \\
$\tup{\text{Roman, Marseille, Gallimard}}$ \\
$\tup{\text{Essai, Paris, ALL}}$ \\
          \end{tabular} \\ \bottomrule
\end{tabular}
\caption{Majorant $S$ de $CubeEmergent($\textsc{Document}$_2$, \textsc{Document}$_1)$} \label{tab:bordure_U}
\end{center}
\end{table}

\begin{table}[H] \begin{center}
\begin{tabular}[h]{c|c}  \toprule
 $G$ & \begin{tabular}{c}
$\tup{\text{ALL, ALL, Gallimard}}$ \\
$\tup{\text{Scolaire, Marseille, ALL}}$ \\
$\tup{\text{Essai, ALL, ALL}}$ \\
          \end{tabular} \\ \bottomrule
\end{tabular}
\caption{Minorant $G$ de $CubeEmergent($\textsc{Document}$_2$, \textsc{Document}$_1)$} \label{tab:bordure_L}
\end{center}
\end{table}

\section{Conclusion}

Nous avons, dans cet article, passé en revue différentes
déclinaisons du concept de cube de données en leur ajoutant une
nouvelle variation : les cubes émergents. En mettant en évidence
les renversements de tendances ou, plus précisément, leurs
évolutions dans des proportions significatives pour le décideur,
les cubes émergents apportent de nouvelles connaissances
particulièrement pertinentes pour comparer deux jeux de données.
Leur représentation compacte et leur calcul efficace en font des
candidats de choix pour toutes les applications d'analyse
multidimensionnelle de flots de données. En effet, les
utilisateurs de telles applications dynamiques cherchent
expressément à connaître toute évolution de tendances pour être à
même d'y réagir en temps réel.

Outre la proposition de ce nouveau type de cube, nous avons défini
une structure unificatrice, le cube convexe, qui est un cadre
formel et générique permettant de caractériser, de manière simple
et solide, différentes variantes de cubes de données, trop souvent
perçus comme les résultats de requêtes ou d'algorithmes et non
comme des concepts. Nous nous sommes attachés à mettre en regard
ces deux perceptions. Il résulte de ce travail une caractérisation
homogène des divers types de cubes examinés, une classification
qui se veut didactique pour que l'utilisateur choisisse la
variante de cube adaptée à ses besoins mais surtout la possibilité
d'une représentation compacte, solidement établie pour la
structure générique du cube convexe et que nous montrons
applicable à ses déclinaisons spécifiques que sont le cube
iceberg, le cube intervallaire, le cube différentiel, le cube
émergent et le cube de données lui-même. Quelle que soit la variante de
cube considérée, la représentation par bordure obtenue est, à
l'heure actuelle, la meilleure, {\it i.e.} la plus petite possible
et donc, dans des contextes où ces considérations sont cruciales,
la moins coûteuse à calculer (d'autant qu'il existe des
algorithmes ayant prouvé leur efficacité) et la moins coûteuse à
matérialiser.


\bibliography{biblio}
\end{document}